\documentclass[aps,pra,a4paper, superscriptaddress,twocolumn, 10pt]{revtex4-1} 
\usepackage{bbm, amsmath, amssymb, amsthm, bm, textcomp,graphicx,color,tabularx}
\usepackage[utf8]{inputenc}
\usepackage[T1]{fontenc}
\usepackage[english]{babel}
\usepackage{color}

\theoremstyle{definition}
\newtheorem{theorem}{Observation}

\newcommand{\red}[1]{\textcolor{blue}{#1}}     
\newcommand{\mcC}{\mathcal{C}}

\begin{document}
\title{Does a large quantum Fisher information imply Bell correlations?}
\author{Florian Fr\"owis}
\affiliation{Department of Applied Physics, University of Geneva, 1211 Geneva, Switzerland}
\author{Matteo Fadel}
\affiliation{Department of Physics, University of Basel, Klingelbergstrasse 82, 4056 Basel, Switzerland}
\author{Philipp Treutlein}
\affiliation{Department of Physics, University of Basel, Klingelbergstrasse 82, 4056 Basel, Switzerland}
\author{Nicolas Gisin}
\affiliation{Department of Applied Physics, University of Geneva, 1211 Geneva, Switzerland}
\author{Nicolas Brunner}
\affiliation{Department of Applied Physics, University of Geneva, 1211 Geneva, Switzerland}

\begin{abstract}
  The quantum Fisher information (QFI) of certain
  multipartite entangled quantum states is larger than what is
  reachable by separable states, providing a metrological advantage. Are these nonclassical
  correlations strong enough to potentially
  violate a Bell inequality? Here, we present evidence from two
  examples. First, we discuss a Bell inequality designed for
  spin-squeezed states which is violated
  only by quantum states with a large QFI. Second, we relax a
  well-known lower bound on the QFI to find the Mermin Bell inequality as a
  special case. However, a fully general link between QFI and Bell
  correlations is still open.
\end{abstract}
\date{\today}

\maketitle

\section{Introduction}
\label{sec:motivation}

The quantum Fisher information
(QFI) is an important quantity in the
geometry of Hilbert spaces \cite{Braunstein_Statistical_1994,Uhlmann_gauge_1991} and has implications for the
foundations of quantum mechanics
\cite{Toth_Quantum_2014,Taddei_Quantum_2013,Frowis_Measures_2012} as well as for quantum
metrology \cite{Helstrom_Quantum_1976,Holevo_Probabilistic_2011,Pezze_2018} and
quantum computation \cite{Shimizu_Necessity_2013,Demkowicz-Dobrzanski_Quantum_2015a}. A well-studied case is $\rho_t = \exp(-i A t)
\rho_0 \exp(i A t)$, where $\rho_0$ is an $N$-partite qubit state and $A$ is a
local operator $A = \sum_{i = 1}^{N} A^{(i)}$ (with fixed operator
norm $\lVert
A^{(i)} \rVert_{\infty} = 1/2$ for convenience). Then, the QFI $\mathcal{F}(\rho,A)$ is
a function of $\rho_0 \equiv \rho$ and $A$. This is a typical situation in quantum
metrology, where $A$ is the generator of a small perturbation (like a
weak external magnetic field) whose
strength we would like to measure as precisely as possible. The
celebrated quantum Cramér-Rao bound implies that a large QFI is necessary for
a high sensitivity \cite{Helstrom_Quantum_1976,Helstrom_1967}. It is well known that certain entangled states
allows one to go beyond the so-called standard quantum
limit for separable states
\cite{Caves_Quantummechanical_1981,Huelga_Improvement_1997,Giovannetti_Quantum_2006}. More
concretely, it was shown \cite{Sorensen_Manyparticle_2001,Pezze_Entanglement_2009} that
$\mathcal{F}(\rho,A) > N$ implies entanglement between the qubits.
The larger the QFI, the larger the entanglement depth of the
state \cite{Sorensen_Entanglement_2001,Toth_Multipartite_2012,Hyllus_Fisher_2012}.

Among the nonclassical properties of quantum systems, Bell
correlations are of particular importance. On the fundamental side,
quantum states exhibiting Bell correlations potentially violate Bell
inequalities, thereby proving that nature cannot be modeled with local
(hidden) variables \cite{Brunner_Bell_2014}. This insight can be used to design device-independent
protocols for quantum applications such as secure communication \cite{Acin_DeviceIndependent_2007} or
random number generation \cite{Pironio_2010,Colbeck}. Every quantum state with Bell
correlations is entangled, but not every entangled quantum state
necessarily has Bell correlations \cite{Werner_Quantum_1989,Barrett_2002,Bowles2016}. Hence,
the latter represents a strictly stronger form of quantum correlations.

In the present work we ask whether there exists a connection between large QFI and Bell correlations. Intuitively such a connection can be motivated by the following observation. Both large QFI and Bell correlations are properties of a quantum state associated to specific measurements. That is they cannot be a property of a quantum state (or measurement) alone, but require the judicious combination of states and measurements. 

More specifically, we investigate here whether quantum states with a high enough QFI generically exhibit Bell
correlations. If this is the case, then are the same measurements that reveal Bell
correlations (potentially in a device-dependent manner) useful to show
the presence of a large QFI? While we do not provide a fully general
answer to these questions, we discuss two examples that hint at
affirmative answers. To this end, we linearize a well-known lower bound on the
QFI. First, we take its elements to start with an ansatz for a Bell
inequality, which turns out to be of the form of multipartite Bell
inequalities based on two-body correlators recently introduced by Tura \textit{et
  al.}~\cite{Tura_Detecting_2014}. Considering a multi-setting extension of this Bell inequality \cite{WagnerPRL}, we show that (i) only quantum states with $\mathcal{F}(\rho,A) > N$ (i.e. beating the standard limit of separable states) can potentially violate the inequality, and (ii) any quantum state with $\mathcal{F}(\rho,A) > 3N$ will violate the Bell inequality. Notably, the same measurements that witness the presence of Bell correlations also
demonstrate a large QFI.

The second type of linearization is a
relaxation of the QFI bound. For a special
case which is optimal for the Greenberger-Horne-Zeilinger (GHZ)
state, we show that one side of this linear bound becomes the Bell
operator for the Mermin inequality \cite{Mermin_Extreme_1990}, another multipartite Bell
inequality specially suited to detect Bell correlations of GHZ
states. Again, a very large QFI is necessary for the violation of the
Bell inequality and the same measurements that show large QFI are
sufficient choices for a potential Bell inequality violation.

\section{Bell operators from a QFI bound}
\label{sec:bell-operator-from}

The QFI $\mathcal{F}(\rho,A)$ is a nonlinear quantity that is defined by measuring the
infinitesimal change of $\rho$ evolving under $U= \exp(-iAt)$ with the Bures
distance $s_B$ in state space,
\begin{equation}
\label{eq:11}
ds_B = \frac{1}{2} \sqrt{\mathcal{F}(\rho,A)} dt.
\end{equation}
While the exact value of $\mathcal{F}(\rho,A)$ is generally only
accessible with complete knowledge about $\rho$ and $A$
\footnote{Given the spectral decomposition \unexpanded{$\rho = \sum_k
    p_k |\psi_k \rangle \langle\psi_k |$}, the QFI reads
    \unexpanded{$\mathcal{F}(\rho,A) = 2 \sum_{k,l}  (p_k - p_l)^2/(p_k + p_l)
   | \langle \psi_k | A|  \psi_l \rangle |^2 $}.}, there are
powerful lower bounds based on relatively simple measurements. For
example, a tighter version of the Heisenberg uncertainty relation
holds for all hermitian operators $B$
\cite{Hotta_Quantum_2004,Pezze_Entanglement_2009,Frowis_Tighter_2015} 
\begin{equation}
\label{eq:12}
\mathcal{F}(\rho,A) \geq \frac{\langle i [A,B] \rangle_{\rho}^2}{\langle (B - \langle B \rangle_{\rho})^2 \rangle_{\rho}},
\end{equation}
where, in the following, we restrict ourselves to  $\langle B
\rangle_{\rho} = 0$ without loss of generality. There is always an
operator $B$ that makes inequality \eqref{eq:12} tight for given
$\rho,A$. Hence, a well-chosen $B$ allows to optimally bound the QFI.

Bell inequalities are bounds on local variable
models. Violations of these inequalities are possible in
quantum mechanics and imply the presence of Bell
correlations. For our purpose, it is sufficient to consider symmetric
Bell inequalities of $N$ parties. Following Ref.~\cite{Tura_Detecting_2014}, we define the symmetrized $k$-body correlators 
\begin{equation}
\label{eq:16}
\mathcal{C}_{j_1,\dots,j_k} = \sum_{\substack{i_1,\dots,i_k = 1\\ \text{all $i$ different}}}^N \left\langle M_{j_1}^{(i_1)} \dots M_{j_k}^{(i_k)}  \right\rangle ,
\end{equation}
where $M_j^{(i)}$ is the measurement operator for setting $j$ at site $i$. Suppose we have
$d$ measurement settings per party. Then, general linear,
symmetric Bell inequalities are of the form 
\begin{equation}
\label{eq:13}
\sum_{k = 1}^N \sum_{j_1, \dots, j_k = 0}^{d - 1} a_{j_1,\dots,j_k} \mathcal{C}_{j_1,\dots,j_k} + a_0\geq 0,
\end{equation}
where $a_{j_1,\dots,j_k},a_0 \in \mathbb{R}$. They are
fulfilled by any local hidden variable model. Here, we are
interested in nontrivial Bell inequalities, that is, in those that are
violated by some quantum states.

Assuming a connection
between large QFI and Bell correlations, one could directly try to
turn the right hand side of Eq.~\eqref{eq:12} into a Bell
inequality up to an additional local bound $a_0$. Every
symmetric, multipartite operator can be expressed in a basis of products of Pauli
operators, and its expectation value can be written as a function of
correlators \eqref{eq:16}. However, the nonlinear terms in
Eq.~\eqref{eq:12} render this approach difficult. Therefore, we propose the linear ansatz 
\begin{equation}
\label{eq:14}
\alpha + \beta \langle B^2 \rangle_{\rho} - \gamma \langle C
\rangle_{\rho} \geq 0,
\end{equation} with $C = i [A,B]$ and $\alpha,\beta,\gamma > 0$. The idea is that if $\langle
B^2 \rangle_{\rho}$ is sufficiently small and $\langle C\rangle_{}$
sufficiently large then inequality \eqref{eq:14} can be
violated which implies Bell correlations and a large QFI via Eq.~\eqref{eq:12}.

This approach turns out to be successful for spin-squeezed states
\cite{Kitagawa_Squeezed_1993}. For concreteness, we choose $A = S_z =
\frac{1}{2}\sum_i \sigma_z^{(i)}$ and $B = S_y = \frac{1}{2}\sum_i
\sigma_y^{(i)}$, that is, collective spin operators. From the
well-known SU(2) commutation relations, one has $C = S_x =
\frac{1}{2} \sum_i \sigma_x^{(i)}$. An $N$-partite qubit state is
called spin-squeezed if 
\begin{equation}
\label{eq:18}
\xi^2 = \frac{N \langle S_y^2 \rangle}{\langle S_x \rangle^2} < 1,
\end{equation}
potentially after a suitable change of collective coordinates. Hence, with our choices
for $A$ and $B$, Eq.~\eqref{eq:14} seems to be a promising candidate
for a Bell inequality that can be violated with spin-squeezed states. 

However, a direct translation of $S_x$ and
$S_y^2$ into measurement settings cannot lead to nontrivial Bell
inequalities because then an LHV model can minimize $\langle S_y^2 \rangle$
and maximize $\langle S_x \rangle$ independently of each other. To
couple the two we introduce new measurement bases for every party $i$
\begin{equation}
\label{eq:15}
\begin{split}
  M_0^{(i)} =&   \cos\phi\,
  \sigma_y^{(i)}+ \sin \phi\, \sigma_x^{(i)}  \\
  M_1^{(i)} =&   \cos\phi\,
  \sigma_y^{(i)}- \sin \phi\, \sigma_x^{(i)}.
\end{split}
\end{equation}
We note that $4 \sin  \phi\langle S_x \rangle = \red{ \mathcal{C}_0 -
\mathcal{C}_1}$ and $4 \cos^2\phi \langle S_y^2 \rangle = N \cos^2 \phi
+ \frac{1}{4}(\mathcal{C}_{00} + 2 \mathcal{C}_{01} +
\mathcal{C}_{11})$. Inserting these relations in Eq. \eqref{eq:14} we obtain an inequality of the class recently
studied by Tura \textit{et
  al.}~\cite{Tura_Detecting_2014}, who show that Eq.~\eqref{eq:14}
constitutes a Bell inequality if $\alpha = 2N \sin^2
\phi, \beta = 8 \cos^2 \phi$ and $\gamma = 4 \sin \phi$. It reads
\begin{equation}
\label{eq:17}
C_0 - C_1 + \frac{1}{2}C_{00} + C_{01} + \frac{1}{2} C_{11} + 2 N \geq 0.
\end{equation}

Under the restriction that the measurement settings $M_j^i$ are identical for all parties, the choice of Eq. \eqref{eq:15} turns out to be the most general parametrization. The Bell inequality can then be written as a Bell correlation witness which requires collective spin measurements only~\cite{Schmied_2016,WagnerPRL}. With the definition of the scaled second moment $\zeta^2 = \langle S_y^2 \rangle / (N/4)$, and of the scaled contrast $\mcC = \langle S_x \rangle / (N/2)$, the inequality becomes 
\begin{equation}\label{eq:1m}
\zeta^2 \geq \dfrac{1}{2} \left( 1 - \sqrt{1-\mcC^2} \right) \, .
\end{equation}
From the fact that $\xi^2=\zeta^2/\mcC^2$ we can express Eq.~\eqref{eq:1m} as a function of $\xi^2$ and $\mcC$, and observe that (see Fig.~\ref{fig:1}): i) for $\xi^2 \leq 1/4$ the inequality is always violated, independently on $\mcC$, ii) for $1/4 < \xi^2 < 1/2$ a minimal $\mcC$ is needed to violate the inequality, iii) for $\xi^2 \geq1/2$ the inequality is never violated. This implies that only states with $\mathcal{F}(\rho,S_z)>2N$ are able to
violate inequality \eqref{eq:17}. Moreover, all states with $\mathcal{F}(\rho,S_z)>4N$ will give violation, that is for sufficiently squeezed states, such as the one-axis and the two-axes twisted spin-squeezed state \cite{Kitagawa_Squeezed_1993}.

 It turns out that these results can be improved by considering a multi-setting generalization of the Bell inequality \eqref{eq:17} presented in Ref.~\cite{WagnerPRL}. Again, this inequality can be written as a Bell correlation witness which requires collective spin measurements only. Specifically, consider the family of $m$-settings inequalities
\begin{equation}\label{eq:3m}
\sum_{k=0}^{m-1} \alpha_k C_k + \dfrac{1}{2} \sum_{k,l} C_{k,l} + \beta_c \geq 0 \;,
\end{equation}
with $\alpha_k = m-2k-1$ and $\beta_c=\lfloor m^2 N/2 \rfloor$. This inequality can again be written as a witness which in the limit $m \rightarrow \infty$ takes the form \cite{WagnerPRL}
\begin{equation}\label{eq:2m}
\zeta^2 \geq 1 - \dfrac{\mcC}{\text{arctanh}(\mcC)} \;,
\end{equation}
which holds for all states featuring local correlations. Performing the same analysis as above, we observe that (see Fig.~\ref{fig:1}): i) for $\xi^2 \leq 1/3$ the inequality is always violated, independently on $\mcC$, ii) for $1/3 < \xi^2 < 1$ a minimal $\mcC$ is needed to violate the inequality. 

To conclude, from Eqs.~\eqref{eq:12}, \eqref{eq:18} and \eqref{eq:2m}, we see that $\mathcal{F}(\rho,S_z)>N$ is a necessary condition for violating the Bell inequality \eqref{eq:3m}. That is, only quantum states that beat the standard limit of separable states can potentially violate the Bell inequality. Moreover, the condition $\mathcal{F}(\rho,S_z)>3N$ is sufficient for violating the Bell inequality, i.e. all states satisfying it will give violation. Again, here the measurement settings are assumed to be identical for all parties.

\begin{figure}
	\centering
	\includegraphics[width=0.8\columnwidth]{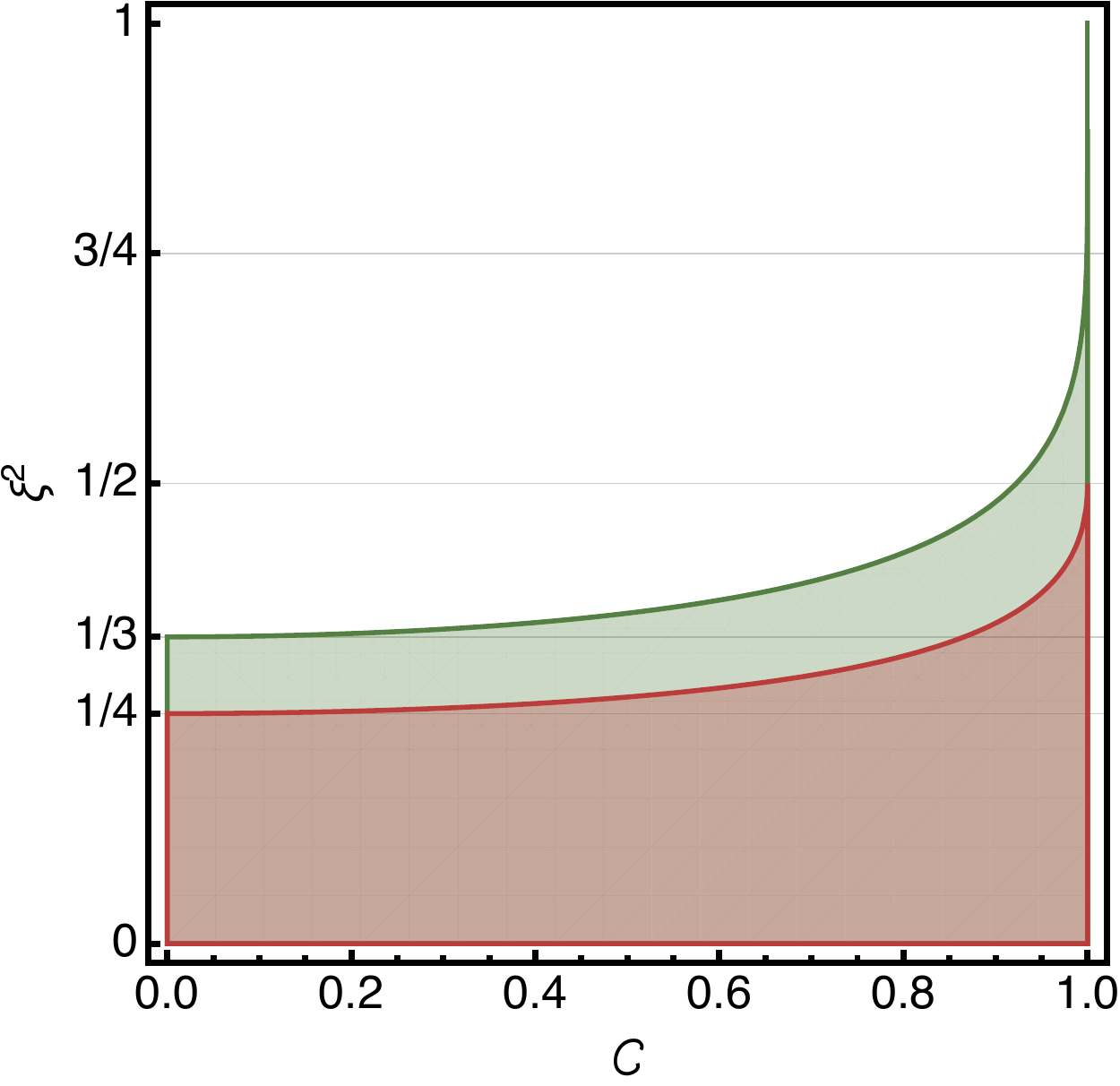}
	\caption{
	\textbf{Regions with Bell correlations.} 
	The figure shows regions in the $\mcC$-$\xi^2$ plane for which Bell correlations are detected by the witnesses given in Eq.~\eqref{eq:1m} (Red) and Eq.~\eqref{eq:2m} (Green).
	Red: For $\xi^2 \leq 1/4$ there are Bell correlations. For $ 1/4 < \xi^2 < 1/2$ the Bell correlation witness Eq.~\ref{eq:1m} is violated only if $\mathcal{C}$ is sufficiently large, while for $\xi^2 \geq 1/2$ there is no violation.
	Green: For $\xi^2 \leq 1/3$ there are Bell correlations. For $1/3 < \xi^2 < 1$ the Bell correlation witness Eq.~\ref{eq:2m} is violated only if $\mathcal{C}$ is sufficiently large.
	}
	\label{fig:1}
\end{figure}

\section{A Bell inequality from a linear QFI bound}
\label{sec:bell-inequality-from-1}

In the previous section, we took a rather free inspiration from the
Heisenberg uncertainty relation to construct a Bell inequality where
only states with a large enough QFI could potentially violate
it. Here, we tackle the problem more directly by linearizing the
right hand side of Eq.~\eqref{eq:12}. We simply use $\langle
B^2 \rangle_{\rho} \leq \lVert B \rVert^2_{\red{\infty}}$, to arrive at
a linear lower bound 
\begin{equation}
\label{eq:2}
\sqrt{\mathcal{F}(\rho,A)} \geq \langle W \rangle_{\rho} \equiv \frac{1}{\lVert B \rVert_{\infty}}\langle i [A,B]  \rangle_{\rho} .
\end{equation}
Here, and in the following, we choose the sign of $B$ such that $\langle W
\rangle_{\rho}$ is positive.

Interestingly, since $\mathcal{F}(\rho,A) \leq N$ for all
separable states, Eq.~\eqref{eq:2} can be turned into
an entanglement witness with operator $\mathcal{W} = \sqrt{N} - W$ for
all $B$. In other words, the correctness of the Heisenberg uncertainty
relation gives us a constructive tool to derive new entanglement witnesses.

This linearization seems to come at the price that the bound is
now much looser, but it turns out that, at least for pure states,
there always exists a $B$ to achieve tightness.
\begin{theorem}
  For $\rho = \left| \psi \right\rangle\!\left\langle \psi \right|
  \equiv \psi$,
  the choice 
\begin{equation}
\label{eq:3}
B = -i [A, \psi]
\end{equation}
implies tightness of Eq.~\eqref{eq:2}.
\end{theorem}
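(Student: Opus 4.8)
The plan is to use the fact that for a pure state the QFI collapses to four times the variance of the generator, $\mathcal{F}(\psi,A) = 4\big(\langle A^2\rangle_\psi - \langle A\rangle_\psi^2\big)$, and to evaluate the three ingredients of $\langle W\rangle_\psi$ in Eq.~\eqref{eq:2} --- namely $\langle B\rangle_\psi$, $\langle i[A,B]\rangle_\psi$ and $\lVert B\rVert_\infty$ --- explicitly for $B=-i[A,\psi]$.

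First I would check that $B$ is an admissible choice, i.e.\ that $\langle B\rangle_\psi=0$, so that the hypothesis $\langle B\rangle_\rho=0$ behind Eq.~\eqref{eq:12} holds. Using the projector identities $\psi^2=\psi$ and $\psi A\psi=\langle A\rangle_\psi\,\psi$, straightforward trace manipulations give $\langle B\rangle_\psi=0$, $\langle B^2\rangle_\psi = \langle A^2\rangle_\psi-\langle A\rangle_\psi^2$, and, for $C=i[A,B]=[A,[A,\psi]]$, $\langle C\rangle_\psi = 2\big(\langle A^2\rangle_\psi-\langle A\rangle_\psi^2\big)$. In particular $\langle C\rangle_\psi\geq 0$, so the sign convention that makes $\langle W\rangle_\psi$ positive is automatically met. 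These are purely computational steps.

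The one step requiring an idea is the operator norm $\lVert B\rVert_\infty$. Here I would write $A|\psi\rangle = \langle A\rangle_\psi|\psi\rangle + \Delta\,|\psi^\perp\rangle$, with $\langle\psi|\psi^\perp\rangle=0$, $\langle\psi^\perp|\psi^\perp\rangle=1$ and $\Delta=\sqrt{\langle A^2\rangle_\psi-\langle A\rangle_\psi^2}$. (If $\Delta=0$ then $B=0$, $\mathcal{F}(\psi,A)=0$, and Eq.~\eqref{eq:2} is trivially saturated with the convention $W=0$; so assume $\Delta>0$.) Substituting this decomposition into $B=-i(A\psi-\psi A)$ shows that $B$ annihilates the orthogonal complement of $\mathrm{span}\{|\psi\rangle,|\psi^\perp\rangle\}$ and acts on that two-dimensional subspace as $-i\Delta\big(|\psi^\perp\rangle\langle\psi| - |\psi\rangle\langle\psi^\perp|\big)$, i.e.\ as $\Delta$ times a Pauli-$y$ matrix. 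Its spectrum is therefore $\{+\Delta,-\Delta,0,\dots,0\}$, so $\lVert B\rVert_\infty=\Delta$.

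Assembling the pieces yields $\langle W\rangle_\psi = \langle C\rangle_\psi/\lVert B\rVert_\infty = 2\Delta^2/\Delta = 2\Delta = 2\sqrt{\langle A^2\rangle_\psi-\langle A\rangle_\psi^2} = \sqrt{\mathcal{F}(\psi,A)}$, which is the claimed equality. Equivalently, one may note that this $B$ already saturates Eq.~\eqref{eq:12}, since $\langle C\rangle_\psi^2/\langle B^2\rangle_\psi = 4\Delta^2 = \mathcal{F}(\psi,A)$, and that the further relaxation $\langle B^2\rangle_\psi\leq\lVert B\rVert_\infty^2$ used to pass from Eq.~\eqref{eq:12} to Eq.~\eqref{eq:2} is here again an equality, because $|\psi\rangle$ lies in the (degenerate) top-eigenvalue eigenspace of $B^2=\Delta^2 P$, with $P$ the projector onto the two-dimensional subspace above. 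The main obstacle is recognizing the rank-two structure of $B$ that fixes $\lVert B\rVert_\infty$; everything else is bookkeeping.
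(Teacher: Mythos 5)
Your proposal is correct and follows essentially the same route as the paper's proof: decompose $A\left|\psi\right\rangle$ along $\left|\psi\right\rangle$ and $|\psi^{\perp}\rangle$ to see that $B$ is rank two with $\lVert B\rVert_{\infty}=\Delta_{\psi}A$, compute $\langle i[A,B]\rangle_{\psi}=2(\Delta_{\psi}A)^2$, and invoke $\mathcal{F}(\psi,A)=4V(\psi,A)$ for pure states. The extra checks you add ($\langle B\rangle_{\psi}=0$, the degenerate case $\Delta_{\psi}A=0$, and the saturation of both inequalities) are welcome but do not change the argument.
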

\begin{proof}
  This can be proved by direct calculation. For this, we define the
  orthogonal state
  $| \psi^{\perp} \rangle = 1/(\Delta_{\psi}A) (A - \langle A
  \rangle_{\psi}) \left| \psi \right\rangle $ with
  $\Delta_{\psi}A = \sqrt{\langle (A - \langle A \rangle_{\psi})^2 \rangle_{\psi}}$ and find that $B = i
  \Delta_{\psi} A (\left| \psi^{\perp} \right\rangle\!\left\langle \psi
  \right| - \left| \psi \right\rangle\!\left\langle \psi^{\perp}
  \right| )$ and 
  $\lVert B \rVert_{\infty} = \Delta_{\psi} A$. This leads to
  $i[A,B] = A^2 \psi - 2 A \psi A + \psi A^2$. Hence, one has
  $\langle W \rangle_\psi = 2 V(\psi,A)$. Since for pure states
  $\mathcal{F}(\psi,A) = 4 V(\psi,A)$
  \cite{Braunstein_Statistical_1994}, this implies equality in  Eq.~\eqref{eq:2}.
\end{proof}

We study a specific example for the choice of
Eq.~\eqref{eq:3}. We consider the GHZ state, 
\begin{equation}
  \label{eq:4}
  \left| \mathrm{GHZ} \right\rangle = \frac{1}{\sqrt{2}}(\left| 0 \right\rangle
^{\otimes N} + \left| 1 \right\rangle ^{\otimes N}).
\end{equation}
This state has the maximal QFI for $A = S_z$ with $\mathcal{F}(\mathrm{GHZ},S_z) = N^2$.
Again, direct calculation shows that $W = N (\left| \mathrm{GHZ}
\right\rangle\!\left\langle \mathrm{GHZ} \right| - | \mathrm{GHZ}^{\perp}
\rangle\!\langle \mathrm{GHZ}^{\perp}| )$, where $|
\mathrm{GHZ}^{\perp} \rangle =1/\sqrt{2}(\left| 0 \right\rangle ^{\otimes N }
- \left| 1 \right\rangle ^{\otimes N})$. 

Interestingly, the very same $W$ as in Eq.~\eqref{eq:2} appears when
quantum mechanics is applied to the Bell inequality of Mermin \cite{Mermin_Extreme_1990}, up to a constant and an irrelevant
phase. With our choice of the  normalization, Mermin's inequality show Bell correlations of $\rho=\left| \mathrm{GHZ}
\right\rangle\!\left\langle \mathrm{GHZ} \right| $ whenever
\begin{equation}
\label{eq:5}
\langle W \rangle_{\rho} \leq
\begin{cases}
  N\, 2^{-N/2+1} & \text{$N$ is even},\\
  N\, 2^{-N/2+1/2} & \text{$N$ is odd}.
\end{cases}
\end{equation}
is violated. We compare this to the witness of large QFI
\begin{equation}
\label{eq:6}
\langle W \rangle_{\rho} \leq \sqrt{\mathcal{F}(\rho,S_z)}.
\end{equation}
We observe a connection between the Bell inequality and the lower
bound on the QFI. The GHZ state maximally violates the Bell inequality
and makes Eq.~\eqref{eq:6} being tight. We see that a certain minimal QFI is
necessary to potentially violate the Bell inequality. However, the
fact that $W$ is tailored to the GHZ state makes both inequalities not
very useful for other states. Furthermore, a QFI beyond the shot noise
limit $N$ is not necessary in this case. To illustrate this, we
consider the quantum state 
\begin{equation}
\label{eq:21}
\rho = \frac{1+p}{2} \left| \mathrm{GHZ} \right\rangle\!\left\langle
  \mathrm{GHZ}\right| + \frac{1-p}{2}| \mathrm{GHZ}^{\perp} \rangle\!\langle
  \mathrm{GHZ}^{\perp}|
\end{equation}
with $p \in [0,1]$. Using the PPT criterion, one easily convince
oneself that the state has bipartite entanglement for any $p>0$. It
violates the Mermin inequality if $p > 2^{-N/2+1} N$. Last, the state
has a QFI of $\mathcal{F}(\rho,S_z) = p^2 N^2$, implying that $p >
1/\sqrt{N}$ is necessary to have a QFI that is larger than for any
separable state. For large $N$, the latter bound is exponentially more
restrictive than the local bound of the Mermin inequality.

\section{Discussion}
\label{sec:discussion-outlook}

We investigated whether Bell correlations and large quantum Fisher information (QFI) are connected.
For two examples of Bell inequalities, one instance from a class studied
in \cite{Tura_Detecting_2014} and the Mermin inequality
\cite{Mermin_Extreme_1990}, we showed that a sufficiently large QFI is
necessary for a violation. How generic is this connection? Is it
possible to find a Bell inequality that is violated for any quantum
state with large QFI? Both
approaches presented in this paper give hope to find further Bell
inequalities designed for such states like the Dicke
states. Currently, however, we are not aware of a constructive method
to conjecture and prove these inequalities. This is mainly due to the
step of finding good measurement basis for a quantum operator like in
Eq.~\eqref{eq:14} for a nontrivial Bell inequality.

Finally, note that a large QFI is generally not necessary for a
quantum state to violate a Bell inequality. Indeed, every pure entangled state
can violate a Bell inequality \cite{Gisin,Popescu,Guhne}, but not
every pure entangled state has a QFI beyond the standard quantum limit if only collective measurements are performed \cite{Hyllus_Not_2010}.

\emph{Acknowledgements.---} We would like to acknowledge discussions with Roman Schmied. Financial support by the European ERC-AG MEC and the Swiss national science foundation (Starting grant DIAQ, NCCR-QSIT, and Grant No. $20020\_169591$) is gratefully acknowledged.

\bibliographystyle{apsrev4-1}
\bibliography{ConnectionQFIBell}
\end{document}